\documentclass[11pt,reqno]{amsart}
\usepackage{amsmath,amssymb,mathrsfs,graphicx}
\usepackage[utf8x]{inputenc}
\usepackage[T1]{fontenc}
\usepackage[USenglish,english]{babel}
\usepackage{amsmath}
\usepackage{amsfonts}
\usepackage{latexsym}
\usepackage{cite}
\usepackage{float}
\usepackage{tikz}
\usepackage{pgfplots}
\usepackage{caption}
\usepackage{enumitem}
\usepackage{booktabs} 
\usepackage{subcaption}
\usepackage{mathtools}      
\mathtoolsset{showonlyrefs} 
\usepackage[export]{adjustbox}
\usepackage{amsthm}
\usepackage{amssymb,amscd}
\usepackage{mathtools}
\usepackage{xargs}
\usepackage{graphicx}
\usepackage{color}
\usepackage{enumitem}
\usepackage{multirow}
\usepackage{lmodern}
\usepackage{pdflscape}
\usepackage{rotating}
\usepackage[text={6.6in, 8.6in}, centering]{geometry}
 
\usepackage{parskip}
\usepackage{microtype}
\usepackage{diagbox}

\newtheorem*{thm*}{Theorem}

\newtheorem{prop}{Proposition}[section]

\newtheorem{cor}[prop]{Corollary}
\newtheorem{rmk}[prop]{Remark}

\usepackage{appendix}

\theoremstyle{definition}

\newtheorem{theorem}{Theorem}

\DeclareMathOperator{\sech}{sech}

 \title[]{A Betchov-Type Hydrodynamic Formulation of the Ivancevic Option-Pricing Equation}
\author[S. Kumar]{Sandeep Kumar}
\address[S. Kumar]{Department of Mathematics, CUNEF Universidad, Madrid, Spain}
\email{sandeep.kumar@cunef.edu}

\date{\today}	
\begin{document}
	\newenvironment{red}{\textcolor{red}}

\maketitle
\begin{abstract}
	
	We show that, under constant-coefficient assumptions, the Ivancevic option-pricing nonlinear Schrödinger equation admits a Betchov-type hydrodynamic formulation analogous to the one appearing in the context of the vortex filament equation. We identify the corresponding continuity equation and momentum-type conservation law satisfied by the density--velocity pair and illustrate the formulation on known Ivancevic-type soliton solutions. The resulting interpretation is structural and model-dependent, and is intended as a bridge between nonlinear wave formulations in mathematical finance and geometric fluid mechanics.
	
\end{abstract}

\section{Introduction}

Nonlinear Schrödinger equations (NLS) arise in many physical and applied settings, including nonlinear optics, water waves, plasma physics, Bose--Einstein condensates, vortex filament dynamics and more recently, in option-pricing \cite{Carretero2008,Sulem2007,Contreras2010,Ivancevic2010}. 
One reason for their broad relevance is that the same amplitude--phase structure can encode different quantities in different contexts. For instance, in fluid dynamics, the nonlinear Schrödinger equation appears in the study of vortex filaments evolving under the vortex filament equation (VFE) \cite{Hasimoto1972}. First proposed by Da Rios \cite{Darios1906}, the VFE was further studied by Betchov, who described the intrinsic dynamics of the arc-length parameterized filament curve in terms of its curvature \(\kappa\) and torsion \(\tau\) \cite{Betchov1965}. Subsequently, with $s$ denoting arc-length parameter and $t$ time, Hasimoto showed that the complex wave function
\begin{equation}
	\psi(s,t)=
	\kappa(s,t)
	\exp\left(
	i\int^s \tau(\xi,t)\,d\xi
	\right),
	\label{eq:intro-hasimoto-transform}
\end{equation}
satisfies a focusing nonlinear Schrödinger equation, up to a time-dependent gauge transformation \cite{Hasimoto1972}. In this setting, the Hasimoto wave function \eqref{eq:intro-hasimoto-transform} may also be viewed through the Madelung variables
\[
\rho=|\psi|^2=\kappa^2,
\qquad
u=2\,\partial_s\arg\psi=2\tau.
\]
With this normalization, Betchov's intrinsic equations can be written as a continuity equation and a momentum-type conservation law for \((\rho,u)\) \cite{Hasimoto1972}.

NLS-type equations have also appeared in econophysics and financial modelling
\cite{Stanley2000}. In mathematical finance, the classical Black--Scholes
model provides a foundational linear PDE for the price of a European option
written on an underlying asset (e.g., a stock) \cite{BlackScholes1973}. Although this framework has been highly influential, it does not explicitly include nonlinear market
feedback, adaptive interactions or collective effects \cite{Jankova2018}.  Motivated by such features, Ivancevic proposed an adaptive-wave option-pricing model based on a
nonlinear Schrödinger equation \cite{Ivancevic2010}.

With $s$ denoting the underlying asset price and $t$ time, the option-price wave function $\psi(s,t)$ solves
	\begin{equation}
	i\psi_t+\frac{\sigma}{2}\psi_{ss}
	+\beta|\psi|^2\psi=0,
	\label{eq:intro-ivancevic}
\end{equation} 
where $\sigma$ is interpreted as volatility which can either be fixed or stochastic and $\beta$ is an adaptive market-potential coefficient, which is related to the interest rate.  Several exact solutions of \eqref{eq:intro-ivancevic}, including bright solitons, dark waves, rogue-wave-type profiles and multi-soliton solutions, have been studied in the recent literature \cite{Chen2021,Delisle2026}. 

The purpose of this paper is to make explicit a structural correspondence between these two NLS settings. Under the constant-coefficient assumptions of the Ivancevic model, we show that \eqref{eq:intro-ivancevic} admits a Betchov-type density--velocity formulation. More precisely, using the Madelung transformation \cite{Madelung1927}
\[
\psi(s,t)=\sqrt{\rho(s,t)}\,e^{i\theta(s,t)},
\]
we define
\begin{equation}
	\rho=|\psi|^2,
	\qquad
	u=\sigma\theta_s
	=
	\sigma\,\operatorname{Im}\left(\frac{\psi_s}{\psi}\right).
	\label{eq:intro-rho-u}
\end{equation}
We prove that these variables satisfy a Betchov-type conservation-law system
\begin{equation}
	\label{eq:convlawintro}
	\rho_t+J_s=0, \qquad J_t+F_s=0,
\end{equation}
where \(J=\rho u\) is the probability current and \(F\) is the associated
momentum flux, given explicitly in Theorem~\ref{thm:scalar-betchov}. In this
interpretation, \(\rho=|\psi|^2\) is the option-price probability density, while
\(u\) is the phase-gradient velocity transporting this density in the asset-price
direction.

The Madelung rewriting of NLS equations into hydrodynamic variables is classical \cite{Madelung1927}.  In the present setting, it leads to an Ivancevic-scaled Betchov-type system in which the coefficients of the option-pricing NLS determine the nonlinear pressure and dispersive contributions in the momentum flux. This gives a coefficient-explicit dictionary between the Ivancevic wave function and the density--current variables of the associated hydrodynamic formulation. As a result, Ivancevic-type solutions, including bright solitons, dark waves, two-component Manakov solitons and scalar \(N\)-bright-soliton solutions, can be interpreted, within the model, as explicit density--velocity configurations.

It is worth mentioning that recently, Shushi reformulated the Black--Scholes
equation \eqref{eq:black-scholes} in quantum-hydrodynamical variables via Wick
rotation and the Madelung transformation, connecting the resulting quantum
potential with classical option-price sensitivities \cite{Shushi2026} (see also
\cite{Contreras2010}). His work can be seen as a linear Black--Scholes counterpart
to the nonlinear Betchov-type formulation of the Ivancevic equation developed
in this article. 

The structure of the article is as follows. In Section 2, we derive the scalar Ivancevic--Betchov correspondence and discuss its extension to fixed-polarization Manakov systems. In Section 3, we discuss the financial interpretation of the density, current and momentum flux. Finally, Section~4 presents some outlook directions and open questions.
	
\section{Betchov-type Formulation}
\subsection{Hasimoto--Betchov background}	

Given arclength parameter $s$ and $t$ time, let \(\mathbf X(s,t) \) be a three-dimensional curve, representing a vortex filament in inviscid fluid, with curvature $\kappa(s,t)$ and torsion $\tau(s,t)$. 

Under the localized induction approximation, after a suitable normalization, the dynamics of the filament curve is described by \cite{Darios1906}
\begin{equation}
	\mathbf X_t= \mathbf X_s \wedge \mathbf X_{ss}.
	\label{eq:VFE}
\end{equation}
This equation is called the vortex filament equation or binormal flow. The Hasimoto transformation combines the two intrinsic quantities $\kappa$ and $\tau$ into the complex wave function
\begin{equation}
	\psi(s,t)
	=
	\kappa(s,t)
	\exp\left(
	i\int^s \tau(\xi,t)\,d\xi
	\right),
	\label{eq:hasimoto-transform}
\end{equation}
which relates \eqref{eq:VFE} with the nonlinear Schrödinger equation
\begin{equation}
	i\psi_t+\psi_{ss}+\frac12\psi(|\psi|^2+A(t))=0, \qquad A(t) \in \mathbb{R}.
	\label{eq:hasimoto-nls}
\end{equation}

Furthermore, by writing \eqref{eq:hasimoto-transform} in Madelung form,
\begin{equation}
	\psi(s,t)=\sqrt{\rho(s,t)}\,e^{i\theta(s,t)},
\end{equation}
one obtains
\begin{equation}
	\rho(s,t)=\kappa(s,t)^2=|\psi(s,t)|^2,
	\qquad
	u(s,t)=2\tau(s,t)=2\theta_s=2\,\partial_s\arg\psi.		
\end{equation}
The intrinsic equation obtained by Betchov in \cite{Betchov1965} can be written in conservation form as \cite{Hasimoto1972}
\begin{equation}
	\rho_t+(\rho u)_s=0,
	\label{eq:betchov-original-1}
\end{equation}
and
\begin{equation}
	(\rho u)_t+
	\partial_s
	\left[
	\rho u^2
	-\frac12\rho^2
	-\rho(\log\rho)_{ss}
	\right]=0.
	\label{eq:betchov-original-2}
\end{equation}
The first equation is a continuity equation for the curvature density $\rho$, while the second equation is a momentum-type conservation law. The term $-\frac12\rho^2$ is the nonlinear pressure term associated with the focusing NLS sign, while the term $-\rho(\log\rho)_{ss}$ is the dispersive or quantum-pressure contribution. Related geometric extensions to Manakov systems, involving two interacting
moving curves and an extended Da Rios--Betchov-type system, were considered
in \cite{Kostov2007manakov}.

As a motivating example, consider Hasimoto's bright soliton on a vortex filament:
\begin{equation}
	\kappa(s,t)
	=
	2\nu\,\sech\bigl(\nu(s-ct)\bigr),
	\qquad
	\tau(s,t)=\frac{c}{2}.
	\label{eq:hasimoto-bright-motivation}
\end{equation}
Then, we have
\begin{equation}
	\rho(s,t)
	=
	4\nu^2\sech^2\bigl(\nu(s-ct)\bigr),
	\qquad
	u(s,t)=c.
	\label{eq:hasimoto-rho-u-motivation}
\end{equation}
Thus, the classical filament solution corresponds to a localized density profile transported with constant velocity.  It is immediate to check that $\rho$ and $\rho u$ satisfy \eqref{eq:betchov-original-1}--\eqref{eq:betchov-original-2}.

\subsection{Ivancevic--Betchov correspondence}


In mathematical finance, an option is a derivative contract whose value depends on the value of an underlying asset (e.g., a stock). Given an asset price $s$ and time $t$, the classical Black--Scholes model describes the price function $V(s,t)$ of such a derivative by
\begin{equation}
	V_t+\frac12\sigma^2s^2V_{ss}+rsV_s-rV=0,
	\label{eq:black-scholes}
\end{equation}
where $\sigma$ is the volatility of the underlying asset and $r$ is the risk-free interest rate \cite{BlackScholes1973}.

The equation \eqref{eq:black-scholes} is linear and provides a foundational framework for option pricing. However, it does not explicitly include nonlinear feedback effects, adaptive market interactions, or collective market responses \cite{Jankova2018}. To incorporate such effects, Ivancevic proposed an adaptive-wave option-pricing model in which the option-price wave function $\psi(s,t)$ satisfies the nonlinear Schrödinger equation
\begin{equation}
	\label{eq:ivancevic-nls}
	i\psi_t+\frac{\sigma}{2}\psi_{ss}
	+\beta|\psi|^2\psi=0.
\end{equation}
Here, $\sigma$ is interpreted as volatility, and $\beta$ is an adaptive market-potential coefficient and it can be interpreted as the interest rate. In this formulation, the squared modulus $\rho(s,t)=|\psi(s,t)|^2$, can be understood as a probability density over possible values of the underlying asset price at time $t$. 

Thus, the bright soliton example \eqref{eq:hasimoto-bright-motivation} not only gives the classical vortex-filament template for the density--velocity formulation, but also motivates us to look for a density--velocity representation for Ivancevic option-pricing model. More precisely, we propose the following results in this direction. 
\begin{theorem}[Scalar Ivancevic--Betchov correspondence]
	\label{thm:scalar-betchov}
	Let $\psi$ be a non-vanishing smooth solution of the Ivancevic nonlinear
	Schrödinger equation \eqref{eq:ivancevic-nls} on a domain $\Omega\subset\mathbb R_s\times\mathbb R_t$, with real constants
	$\sigma\neq 0$ and $\beta$. Define
	\begin{equation}
		\label{eq:rho-u-def}
		\rho=|\psi|^2,
		\qquad
		u=\sigma\,\operatorname{Im}\left(\frac{\psi_s}{\psi}\right).			
	\end{equation}
	Equivalently, if $\psi=\sqrt{\rho}e^{i\theta}$, then \( u=\sigma\theta_s. \)
	Then $(\rho,u)$ satisfies the Betchov-type conservation laws
\begin{align}
	\rho_t+J_s &=0,
	\qquad
	J(\rho,u)=\rho u,
	\label{eq:convlaw1}
	\\
	J_t+K_s &=0,
	\qquad
	K(\rho,u)
	=
	\rho u^2
	-\frac{\sigma\beta}{2}\rho^2
	-\frac{\sigma^2}{4}\rho(\log\rho)_{ss}.
	\label{eq:convlaw2}
\end{align}
 
\end{theorem}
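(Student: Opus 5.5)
The plan is to pass to Madelung (amplitude--phase) variables, split the Ivancevic equation \eqref{eq:ivancevic-nls} into real and imaginary parts, and recast the two resulting real equations as \eqref{eq:convlaw1} and \eqref{eq:convlaw2}. Since $\psi$ is smooth and non-vanishing on $\Omega$, I will work locally, where a smooth phase exists, and write $\psi=\sqrt{\rho}\,e^{i\theta}$ with $\rho>0$ smooth. Then $\operatorname{Im}(\psi_s/\psi)=\theta_s$, so $u=\sigma\theta_s$. Both $u$ and the target identities involve only $\theta_s$ and $\rho$, which are globally defined (e.g.\ $\theta_s=\operatorname{Im}(\psi_s/\psi)$), so the local computation suffices. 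Set $a=\sigma/2$ for brevity.

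\textbf{Step 1 (splitting the equation).} Substitute $\psi=\sqrt{\rho}\,e^{i\theta}$ into $i\psi_t+a\psi_{ss}+\beta|\psi|^2\psi=0$, divide by $e^{i\theta}$, and separate real and imaginary parts. The imaginary part, multiplied by $2\sqrt\rho$, gives
\[
\rho_t+2a(\rho\theta_s)_s=0,
\]
which is exactly $\rho_t+(\rho u)_s=0$, i.e.\ \eqref{eq:convlaw1}. Alternatively, this follows from $\rho_t=2\operatorname{Re}(\bar\psi\psi_t)$ together with the equation.

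The real part, divided by $\sqrt\rho$, gives the Hamilton--Jacobi-type phase equation
\[
\theta_t+a\theta_s^2-a\,Q-\beta\rho=0,
\qquad
Q:=\frac{(\sqrt\rho)_{ss}}{\sqrt\rho}.
\]

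\textbf{Step 2 (velocity equation).} Differentiate the phase equation in $s$ and multiply by $2a=\sigma$. This yields the Euler-type equation
\[
u_t+uu_s=\frac{\sigma^2}{2}Q_s+\sigma\beta\rho_s.
\]

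\textbf{Step 3 (momentum law).} Compute $J_t=\rho_t u+\rho u_t$, using \eqref{eq:convlaw1} for $\rho_t$ and Step 2 for $u_t$. This gives
\[
J_t=-(\rho u)_s u-\rho uu_s+\frac{\sigma\beta}{2}(\rho^2)_s+\frac{\sigma^2}{2}\rho Q_s
=-(\rho u^2)_s+\frac{\sigma\beta}{2}(\rho^2)_s+\frac{\sigma^2}{2}\rho Q_s .
\]
Comparing with the claimed flux $K$, it then remains to prove the quantum-pressure identity
\[
\rho\,\partial_s\!\left(\frac{(\sqrt\rho)_{ss}}{\sqrt\rho}\right)=\frac12\,\partial_s\bigl(\rho(\log\rho)_{ss}\bigr).
\]

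\textbf{Step 4 (the main obstacle).} Step 4, this identity, is the only non-mechanical point. It is also where the coefficient $\sigma^2/4$ in $K$ is fixed, so I would check the bookkeeping carefully here.

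To prove it, write $\ell=\log\rho$. Then $(\sqrt\rho)_{ss}/\sqrt\rho=\tfrac12\ell_{ss}+\tfrac14\ell_s^2$, so
\[
\rho Q_s=\rho\Bigl(\tfrac12\ell_{sss}+\tfrac12\ell_s\ell_{ss}\Bigr).
\]
On the other hand, using $\rho_s=\rho\ell_s$,
\[
\tfrac12(\rho\ell_{ss})_s=\tfrac12\rho\ell_s\ell_{ss}+\tfrac12\rho\ell_{sss}.
\]
The two expressions agree, which proves the identity.

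Substituting the identity into Step 3 gives $J_t+K_s=0$ with
\[
K=\rho u^2-\frac{\sigma\beta}{2}\rho^2-\frac{\sigma^2}{4}\rho(\log\rho)_{ss},
\]
which is \eqref{eq:convlaw2}.

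As a sanity check, I would confirm consistency with \eqref{eq:betchov-original-2} in the Hasimoto case. Take $\sigma=2$ and $\beta=\tfrac12$; the gauge term $A(t)$ is constant in $s$, so it only shifts $\theta_t$ and drops out after $\partial_s$. With these values $K$ becomes
\[
\rho u^2-\tfrac12\rho^2-\rho(\log\rho)_{ss},
\]
matching \eqref{eq:betchov-original-2}.
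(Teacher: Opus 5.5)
Your proposal is correct and follows the paper's route: Madelung splitting into the continuity and phase equations, $s$-differentiation to the Euler-type equation, multiplication by $\rho$ using the continuity equation, and the quantum-pressure identity $\rho\,\partial_s\bigl((\sqrt\rho)_{ss}/\sqrt\rho\bigr)=\tfrac12\partial_s\bigl(\rho(\log\rho)_{ss}\bigr)$, which you actually verify where the paper only states it. You also note that only $\rho$ and $\theta_s=\operatorname{Im}(\psi_s/\psi)$ enter the final identities, so a locally defined phase suffices on a possibly non-simply-connected $\Omega$; this is a small but welcome point the paper leaves implicit.
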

\begin{rmk}
	For the more general equation
	\begin{equation}
		\label{eq:ivancevic-nls-gen}
	i\psi_t+D\psi_{ss}+\beta|\psi|^2\psi=0,		
	\end{equation}
	with $u=2D\,\theta_s$, the second flux becomes
	\begin{equation}
		\label{eq:flux-gen}
	K(\rho, u) = \rho u^2-D\beta\rho^2-D^2\rho(\log\rho)_{ss}.		
	\end{equation}
	The Ivancevic model \eqref{eq:ivancevic-nls} corresponds to $D=\sigma/2$.		
\end{rmk}

Although Ivancevic's stochastic-volatility formulation corresponds to the two-component case \(n=2\), the same calculation applies to the natural \(n\)-component Manakov generalization under a fixed-polarization assumption.  
\begin{cor}[Fixed-polarization vector extension]
	\label{cor:thm1}
	Let
	\(q(s,t)= (q_1(s,t), \ldots, q_n(s,t))^T\)
	solve the $n$-component Ivancevic--Manakov system
	\[
	i q_t+Dq_{ss}+\beta(q^\dagger q)q=0.
	\]
	Assume that $q$ has fixed polarization, namely
	\[
	q(s,t)=c\,\sqrt{\rho(s,t)}\,e^{i\theta(s,t)},
	\qquad
	c^\dagger c=1,
	\]
	where $c\in\mathbb C^n$ is constant. Define
	\[
	\rho=q^\dagger q=\sum_{j=1}^n |q_j|^2,
	\qquad
	u=2D\,\theta_s.
	\]
	Then $(\rho,u)$ satisfies \eqref{eq:convlaw1}, and \eqref{eq:convlaw2} with $K(\rho, u) = \rho u^2-D\beta\rho^2 -D^2\rho(\log\rho)_{ss}$. 
	As before, the Ivancevic scaling is $D=\sigma/2$.
\end{cor}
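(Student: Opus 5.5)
The plan is to reduce Corollary~\ref{cor:thm1} to the scalar computation behind Theorem~\ref{thm:scalar-betchov}, in the generalized form of the Remark (coefficient $D$, velocity $u=2D\theta_s$, flux \eqref{eq:flux-gen}). Set $\phi=\sqrt{\rho}\,e^{i\theta}$, so that $q=c\,\phi$ with $c$ constant and $c^\dagger c=1$. Then $q^\dagger q=|\phi|^2=\rho$, which confirms that the two definitions of $\rho$ in the statement agree. We also have $q_t=c\,\phi_t$ and $q_{ss}=c\,\phi_{ss}$. Substituting into the Manakov system gives
\[
c\bigl(i\phi_t+D\phi_{ss}+\beta|\phi|^2\phi\bigr)=0 .
\]
Taking the inner product with $c$, that is, multiplying on the left by $c^\dagger$, shows that $\phi$ solves the scalar equation \eqref{eq:ivancevic-nls-gen}. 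Because $\rho>0$ wherever $\theta$ and $\log\rho$ are defined, $\phi$ is non-vanishing and smooth on that domain. So the scalar result applies directly to $(\rho,u)$ with $u=2D\theta_s$.

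What remains is to justify the scalar statement for general $D\neq0$. I would do this by repeating the Madelung computation rather than rescaling, to avoid sign issues. Write $\phi=\sqrt\rho\,e^{i\theta}$, divide the equation by $e^{i\theta}$, and separate real and imaginary parts. The imaginary part gives $\rho_t+2D(\rho\theta_s)_s=0$, which is \eqref{eq:convlaw1} with $J=\rho u$. The real part gives the quantum Hamilton--Jacobi equation
\[
\theta_t+D\theta_s^2-\beta\rho-D\frac{(\sqrt\rho)_{ss}}{\sqrt\rho}=0 .
\]
Differentiate this in $s$ and multiply by $2D$. Using $u=2D\theta_s$, this yields
\[
u_t+uu_s-2D\beta\rho_s-2D^2\Bigl(\frac{(\sqrt\rho)_{ss}}{\sqrt\rho}\Bigr)_s=0 .
\]

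Next, form $J_t=\rho_t u+\rho u_t$ and substitute the continuity equation together with the equation for $u$:
\[
J_t=-(\rho u)_s u-\rho u u_s+2D\beta\rho\rho_s+2D^2\rho\Bigl(\frac{(\sqrt\rho)_{ss}}{\sqrt\rho}\Bigr)_s .
\]
The first two terms combine to $-(\rho u^2)_s$, and the third equals $(D\beta\rho^2)_s$. The main obstacle, though it is still routine, is the identity for the dispersive term:
\[
2\rho\Bigl(\frac{(\sqrt\rho)_{ss}}{\sqrt\rho}\Bigr)_s=\bigl(\rho(\log\rho)_{ss}\bigr)_s .
\]
To prove it, use $\frac{(\sqrt\rho)_{ss}}{\sqrt\rho}=\frac12(\log\rho)_{ss}+\frac14\bigl((\log\rho)_s\bigr)^2$. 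Put $L=\log\rho$, so that $\rho_s=\rho L_s$. The left side becomes $\rho(L_{sss}+L_sL_{ss})$. The right side expands to $\rho_sL_{ss}+\rho L_{sss}=\rho(L_sL_{ss}+L_{sss})$, so the two sides agree. Collecting terms then gives $J_t+K_s=0$ with $K=\rho u^2-D\beta\rho^2-D^2\rho(\log\rho)_{ss}$, which is \eqref{eq:convlaw2} in the form \eqref{eq:flux-gen}.

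Finally, setting $D=\sigma/2$ recovers Theorem~\ref{thm:scalar-betchov}. The only point to check is that the polarization vector $c$ drops out of everything: it does, because only $|c|^2=1$ enters $\rho$, and $\theta$ is carried entirely by the scalar factor $\phi$.
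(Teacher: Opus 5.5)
Your proposal is correct and takes essentially the paper's route. The paper gives no separate proof of the corollary; it only remarks that the scalar Madelung calculation of Theorem~\ref{thm:scalar-betchov} carries over to the fixed-polarization case. Your reduction $q=c\,\phi$, followed by the Madelung computation with general $D$, is that argument made explicit, and you also verify the identity $2\rho\,\partial_s\bigl((\sqrt\rho)_{ss}/\sqrt\rho\bigr)=\partial_s\bigl[\rho(\log\rho)_{ss}\bigr]$, which the paper states without proof.
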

\begin{rmk}
	The formulation above is pointwise valid on regions where $\rho>0$. If
	$\psi$ has zeros, then $u=\sigma\operatorname{Im}(\psi_s/\psi)$ and
	$(\log\rho)_{ss}$ may be singular. In such cases the continuity equation is
	better written in current form,
	\[
	\rho_t+J_s=0,
	\qquad
	J=\sigma\,\operatorname{Im}(\psi^*\psi_s),
	\]
	and the velocity formulation is understood away from vacuum points.
\end{rmk}	

\begin{proof}[Proof of Theorem~\ref{thm:scalar-betchov}]
	Let
	\[
	\psi=\sqrt{\rho}\,e^{i\theta},
	\qquad
	\rho>0.
	\]
	Substituting this expression into
	\[
	i\psi_t+\frac{\sigma}{2}\psi_{ss}
	+\beta|\psi|^2\psi=0
	\]
	and separating real and imaginary parts gives the two equations
	\begin{equation}
	\rho_t+\sigma(\rho\theta_s)_s=0,
	\label{eq:proof-continuity-theta}		
	\end{equation}
	and
	\begin{equation}
	\theta_t+\frac{\sigma}{2}\theta_s^2
-\frac{\sigma}{2}\frac{(\sqrt{\rho})_{ss}}{\sqrt{\rho}}
-\beta\rho=0.
\label{eq:proof-phase}		
	\end{equation}
By defining $u=\sigma\theta_s$, \eqref{eq:proof-continuity-theta} becomes
	\[
	\rho_t+(\rho u)_s=0.
	\]
	
Next, differentiating
\eqref{eq:proof-phase} with respect to $s$, we obtain
\begin{equation}
	u_t+u u_s
-\frac{\sigma^2}{2}
\partial_s
\left(
\frac{(\sqrt{\rho})_{ss}}{\sqrt{\rho}}
\right)
-\sigma\beta\rho_s=0.
\label{eq:proof-velocity}		
	\end{equation}
Multiplying \eqref{eq:proof-velocity} by \(\rho\), using the continuity equation, and rewriting the dispersive term with the identity
\[
\rho\partial_s
\left(
\frac{(\sqrt{\rho})_{ss}}{\sqrt{\rho}}
\right)
=
\frac12\partial_s\left[\rho(\log\rho)_{ss}\right],
\]
valid for smooth positive \(\rho\), yields the conservative form
\[
(\rho u)_t+
\partial_s
\left[
\rho u^2
-\frac{\sigma\beta}{2}\rho^2
-\frac{\sigma^2}{4}\rho(\log\rho)_{ss}
\right]=0.
\]
\end{proof}
	
\subsection{Explicit examples}
In this section we illustrate the Betchov-type formulation on several explicit solutions of the Ivancevic model \eqref{eq:ivancevic-nls} considered in the literature; for instance, the scalar Ivancevic solitons, the Manakov two-component soliton, and the multi-soliton density obtained through Hirota's formalism.

	\subsubsection{Scalar Ivancevic bright soliton}
A standard bright soliton solution of \eqref{eq:ivancevic-nls} is
\begin{equation}
	\psi(s,t)
	=
	A\,\sech(s-\sigma kt)
	\exp\left[
	i\left(
	ks-\frac{\sigma}{2}(k^2-1)t
	\right)
	\right], \quad \text{with} \quad 		A^2=\frac{\sigma}{\beta}.
\end{equation}
Thus, from \eqref{eq:rho-u-def}, we have
\begin{equation}
	\rho(s,t)
	=
	\frac{\sigma}{\beta}
	\sech^2(s-\sigma kt),
	\qquad
	u(s,t)=\sigma k.
\end{equation}
Since $\rho=\rho(s-\sigma kt)$ and $u=\sigma k$, the continuity equation \eqref{eq:convlaw1} is immediate. Furthermore, by writing
\begin{equation}
	(\log\rho)_{ss}
	=
	-2\sech^2(s-\sigma kt)
	=
	-\frac{2\beta}{\sigma}\rho.
\end{equation}
the nonlinear pressure and dispersive terms in flux $K(\rho, u)$ cancel and \eqref{eq:convlaw2} follows.

\subsubsection{Scalar Ivancevic dark soliton}

The dark or shock-wave solution has the form
\begin{equation}
	\psi(s,t)
	=
	A\,\tanh(s-\sigma kt)
	\exp\left[
	i\left(
	ks-\frac{\sigma}{2}(k^2+2)t
	\right)
	\right].
\end{equation}
For the normalization used here, the dark soliton corresponds to the defocusing sign and we take $A^2=-\sigma/\beta$, and obtain
\begin{equation}
	\rho(s,t)
	=
	-\frac{\sigma}{\beta}
	\tanh^2(s-\sigma kt),
	\qquad
	u(s,t)=\sigma k.
\end{equation}
The first Betchov equation is again satisfied because $\rho=\rho(s-\sigma kt)$. The verification of the second equation is pointwise away from the zero of $\rho$, i.e., away from $s-\sigma kt=0$. In this case, the logarithmic term $(\log\rho)_{ss}$ is singular at the density zero, so the hydrodynamic formulation should be understood on the open set where $\rho>0$. 

\begin{rmk}
	The dark-soliton example should be understood as the defocusing counterpart of
	the bright-soliton case. For \(\sigma>0\), this corresponds to \(\beta<0\) in
	the Ivancevic equation. This focusing/defocusing distinction also appears in
	the geometric theory of curve flows: the classical Euclidean Hasimoto
	correspondence leads to the focusing NLS, while related curve flows in
	hyperbolic settings lead to defocusing NLS-type equations \cite{Ding1998}.
	We only use this observation as a geometric analogy; the Betchov-type formulation
	above is derived directly from the Ivancevic NLS.
\end{rmk}

\subsubsection{Two-component Manakov solution}
A coupled volatility--option-price system of Manakov type was considered in
\cite{Ivancevic2010}; see also the original Manakov system \cite{Manakov1974}
and related geometric interpretations of the Manakov system in terms of
interacting moving curves \cite{Kostov2007manakov}.

This example corresponds to the two-component case of Corollary \ref{cor:thm1}. Let

\begin{equation}
	q(s,t)
	=
	\begin{pmatrix}
		\sigma_f(s,t)\\
		\psi(s,t)
	\end{pmatrix},
\end{equation}
where $\sigma_f$ denotes the volatility wave and $\psi$ denotes the option-price
wave. The corresponding two-component NLS has the schematic form
\begin{equation}
	i q_t+Dq_{ss}+\beta(q^\dagger q)q=0,
\end{equation}
where
\begin{equation}
	q^\dagger q=|\sigma_f|^2+|\psi|^2.
\end{equation}
For this vector system the natural density variable is the total intensity
\begin{equation}
	\rho=q^\dagger q=|\sigma_f|^2+|\psi|^2,
\end{equation}
and the corresponding velocity is
\begin{equation}
	u
	=
	2D
	\frac{
		\operatorname{Im}(q^\dagger q_s)
	}{
		q^\dagger q
	}.
\end{equation}
Ivancevic's two-component bright soliton can be written as
\begin{equation}
	q(s,t)
	=
	2b\,\mathbf c\,
	\sech\bigl(2b(s+4at)\bigr)
	\exp\left[
	-2i(2a^2t+as-2b^2t)
	\right], 
\end{equation}
where $\mathbf c= \begin{pmatrix}
	c_1, 
	c_2
\end{pmatrix}^T$, $|c_1|^2+|c_2|^2=1.$

Therefore
\begin{equation}
	\rho(s,t)
	=
	q^\dagger q
	=
	4b^2\sech^2\bigl(2b(s+4at)\bigr),
\end{equation}
and the phase
\begin{equation}
	\theta(s,t)
	=
	-2(2a^2t+as-2b^2t),
\end{equation}
implies
\begin{equation}
	u=2D\theta_s=-4a.
\end{equation}

This example shows that in the coupled Ivancevic model the natural density variable is not the density of one component alone, but the total intensity of the vector wave. Together with the common phase-gradient velocity in the fixed-polarization case, this total density gives the Betchov-type hydrodynamic formulation described in Corollary~\ref{cor:thm1}. The same observation applies to the natural \(n\)-component Manakov generalization under the same fixed-polarization assumption.

\subsubsection{$N$-bright-soliton densities}
In \cite{Delisle2026}, the authors construct bright $N$-soliton solutions of the scalar Ivancevic equation by Hirota's method. The solution has the form
\begin{equation}
	\label{eq:psi_Nmulti}
	\psi_N(x,t)
	=
	\frac{f_N(x,t)}{g_N(x,t)}
	\psi_0(x,t),
\end{equation}
where
\begin{equation}
	\psi_0(x,t)=
	\exp\bigl(i(\kappa x+\omega t+\phi)\bigr),
	\label{eq:psi_0multi}
\end{equation}
$f_N$ is complex-valued and $g_N$ is real-valued. In our setting, the corresponding 
density variable can be defined as
\begin{equation}
	\rho_N(x,t)=|\psi_N(x,t)|^2
	=
	\left|
	\frac{f_N(x,t)}{g_N(x,t)}
	\right|^2.
\end{equation}
Equivalently, using the Hirota representation \cite{Hirota1971}, the density may be written as
\begin{equation}
	\rho_N(x,t)
	=
	\frac{\sigma}{\beta}
	\partial_{xx}\log g_N(x,t),
\end{equation}
and the velocity field is
\begin{equation}
	u_N(x,t)
	=
	\sigma\, \operatorname{Im}
	\left( \frac{(\psi_N)_x}{\psi_N} \right) =	
	\sigma \left[ \kappa+ \operatorname{Im} \left( \frac{(f_N)_x}{f_N} \right) \right],
\end{equation}
where the second equality follows from \eqref{eq:psi_Nmulti} and \eqref{eq:psi_0multi}. Therefore, each $N$-bright-soliton wave induces  
\begin{equation}
	\rho_N
	=
	|\psi_N|^2,
	\qquad
	u_N
	=
	\sigma\operatorname{Im}\left(\frac{(\psi_N)_x}{\psi_N}\right).
\end{equation}
By the general Betchov-type formulation, this pair satisfies
\begin{equation}
	(\rho_N)_t+(\rho_N u_N)_x=0
\end{equation}
and
\begin{equation}
	(\rho_N u_N)_t+
	\partial_x
	\left[
	\rho_Nu_N^2
	-\frac{\sigma\beta}{2}\rho_N^2
	-\frac{\sigma^2}{4}\rho_N(\log\rho_N)_{xx}
	\right]=0,
\end{equation}
wherever $\rho_N>0$. 
For $N=1$, the density consists of a single localized
$\sech^2$-type peak, as in the bright-soliton case discussed above.

For \(N=2\), \(\rho_2\) is a single density field with two interacting localized peaks; it is not  the sum of two independent one-soliton densities. This is illustrated in the left panel of Figure~\ref{fig:hirota-densities}, where the two peaks are initially localized near distinct asset-price regions and evolve evolve over the time interval \(0\leq t\leq 40\). The right panel shows the corresponding probability current \( J_2=\rho_2u_2.\) The current should not be expected to have the same visual structure as the density, since it measures transported probability mass rather than probability mass alone. For the parameters used in the figure, one soliton branch is approximately stationary while the other travels through it. Consequently, \(J_2\) is mainly concentrated along the moving branch, whereas the stationary density peak contributes little to the current. During the interaction, however, the full two-soliton field should still be interpreted as a single nonlinear solution, not as a linear superposition of two independent currents. Likewise, for general $N$, $\rho_N$ describes a single probability density with $N$ interacting localized regions.

\begin{figure}[t]
	\centering
	\begin{subfigure}{0.48\textwidth}
		\centering
		\includegraphics[width=\textwidth]{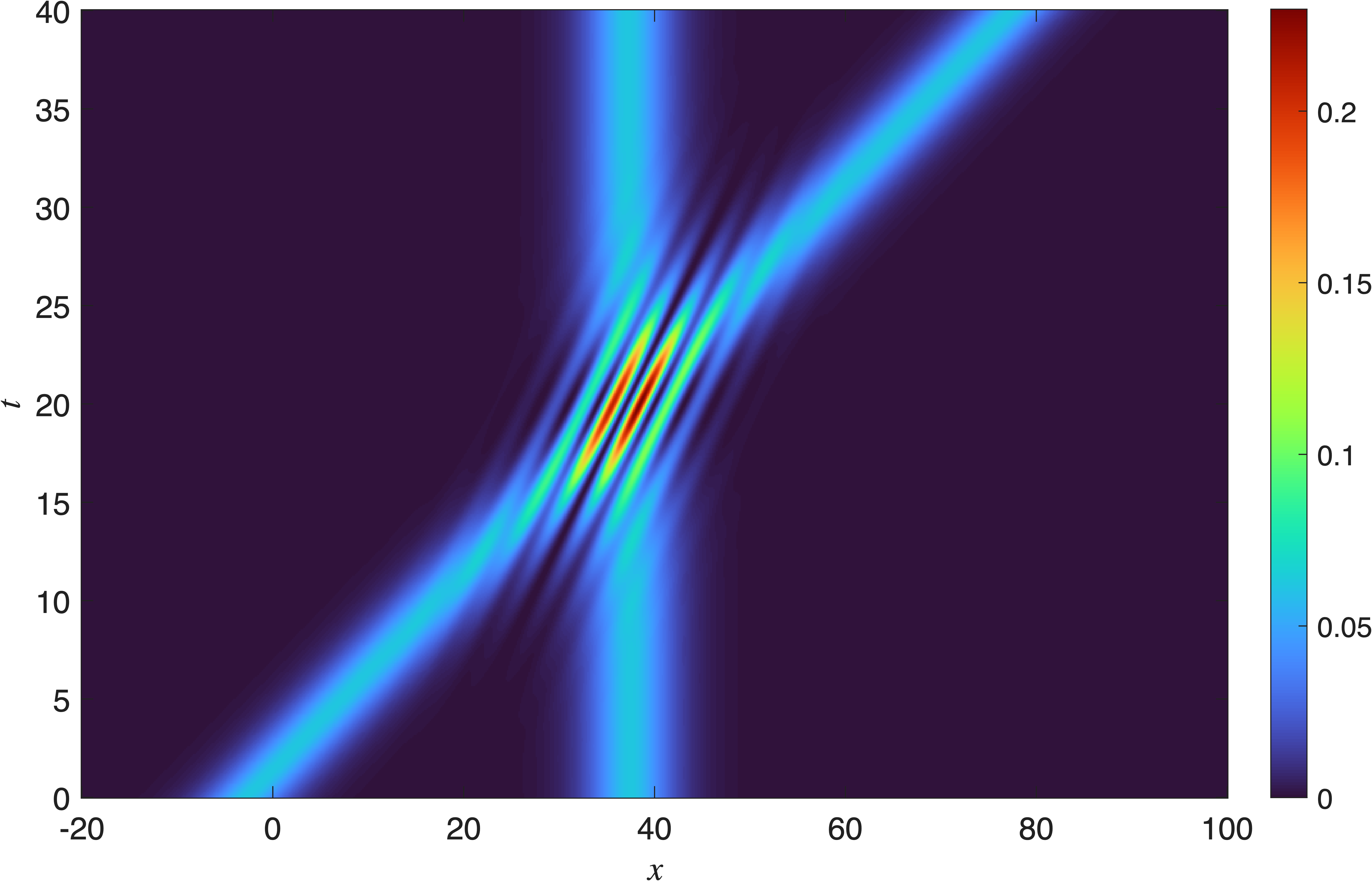}
		\caption{\(\rho(s,t)\)}		
	\end{subfigure}
		\hfill
	\begin{subfigure}{0.48\textwidth}
		\centering
		\includegraphics[width=\textwidth]{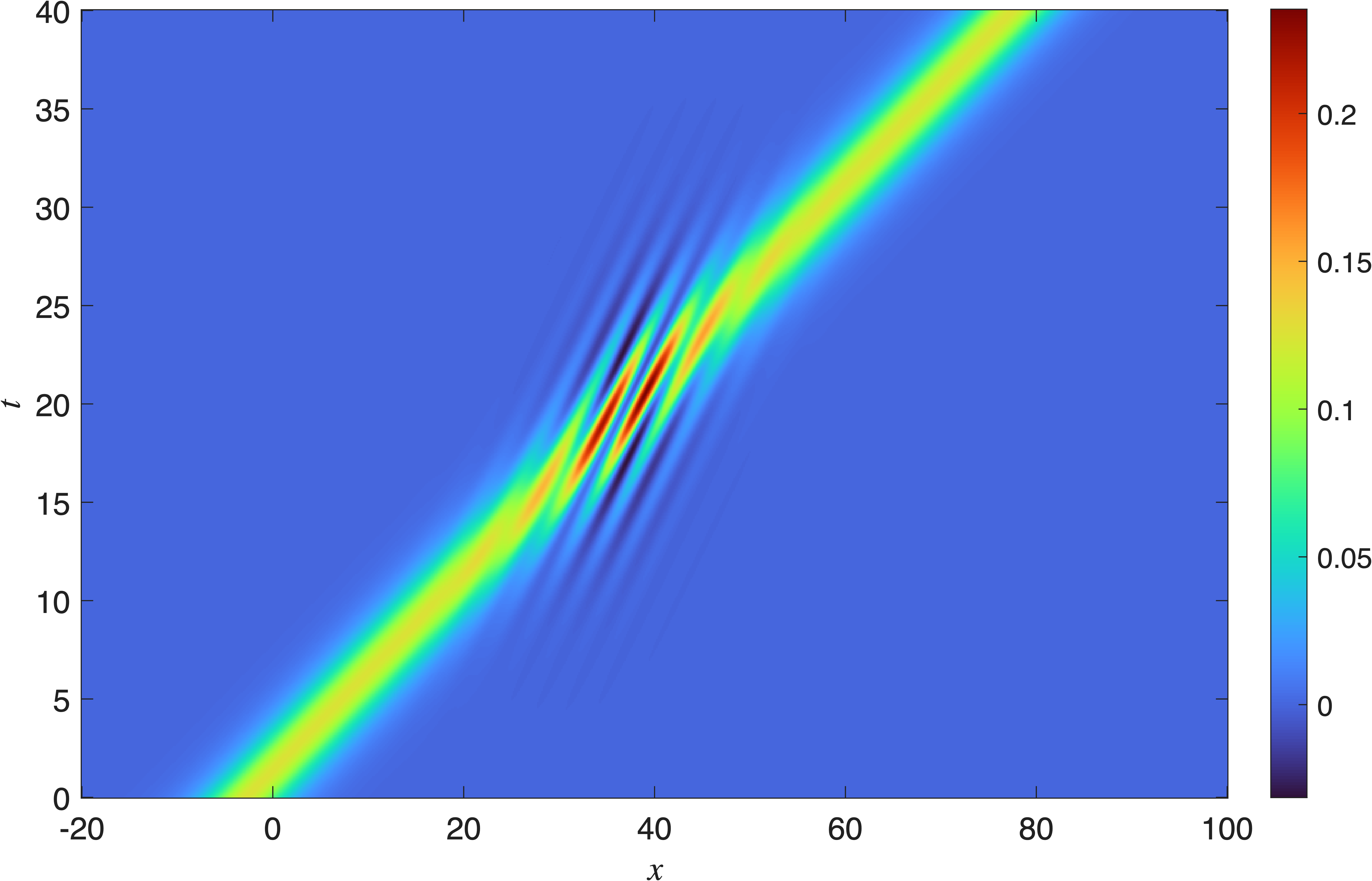}
		\caption{\(J(s,t)\)}		
	\end{subfigure}
	\caption{Density and current associated with the two-bright-soliton Hirota solution. The left panel shows \(\rho_2(x,t)=|\psi_2(x,t)|^2\), while the right panel shows the probability current \(J_2=\sigma\operatorname{Im}(\psi_2^*(\psi_2)_x)\). For the chosen parameters, one soliton branch is approximately stationary and the other is moving, so the current is mainly concentrated along the moving branch.}
	\label{fig:hirota-densities}
\end{figure}

\section{Financial interpretation} The results presented in the previous sections allow us to give a hydrodynamic interpretation of the Ivancevic option-pricing wave function. In the original Ivancevic model \cite{Ivancevic2010}, the unknown complex-valued function $\psi(s, t)$ denotes an option-price wave, while $\rho=|\psi(s,t)|^2$ can be understood as a probability density in the asset-price variable $s$. Thus, for a given time $t$, the quantity 
\begin{equation}
	\int_a^b \rho(s,t)\,ds
\end{equation}
may be interpreted as the probability mass assigned to the price interval $[a, b]$, 
and
\begin{equation}
	u(s,t)
	=
	\sigma\,\operatorname{Im}\left(\frac{\psi_s}{\psi}\right),
\end{equation}
is the phase-gradient velocity associated with the option-price wave.

Indeed, the first equation \eqref{eq:convlaw1} is a continuity equation and it states that changes in the option-price density are driven by the divergence of the probability current
\begin{equation}
	J=\rho u
	=
	\sigma\,\operatorname{Im}(\psi^*\psi_s).
\end{equation}
The current $J$ represents the model-implied amount of probability mass flowing past a given price level per unit time.  With this interpretation, at each price level $s$, $u$ determines the direction and speed of transport of the density, i.e., $u>0$ implies transport towards higher price levels, while $u<0$ corresponds to transport toward lower price levels.

This interpretation complements, rather than replaces, the classical Greeks. The Greeks measure sensitivities of an option value, i.e., $V$ in \eqref{eq:black-scholes}, with respect to variables such as asset price, time, volatility and interest rate. By contrast, $u$ is a phase-gradient transport variable. For example, the Delta of a classical option measures sensitivity with respect to the asset price, whereas $u$ describes the local phase-driven propagation of the probability density along the asset-price axis. Similarly, while Ivancevic identifies $\rho_t$ with Theta, this quantity measures the redistribution of probability mass, rather than the time decay of an option price in the classical sense. 

The momentum-type equation \eqref{eq:convlaw2} provides a second layer of interpretation. The first term $\rho u^2$ in the flux $K$, is the convective transport of momentum, while the second term $(-\sigma\beta \rho^2/2)$ is the nonlinear pressure term induced by the cubic nonlinearity of the Ivancevic model \eqref{eq:ivancevic-nls}. For $\beta>0$, corresponding to the focusing case, this pressure contribution is negative, in analogy with the focusing system arising from vortex filament dynamics. If $\beta<0$, the sign reverses and the model has a defocusing or positive-pressure character.  Finally, the term $(-\sigma^2 \rho(\log\rho)_{ss}/4)$ is the dispersive, or quantum-pressure, contribution. This term encodes the effect of spatial variations of the density and is responsible for the regularizing dispersive behavior characteristic of nonlinear Schrödinger dynamics.

With respect to the examples discussed above, for a bright soliton, the density $\rho$ is localized and transported with constant velocity. Within the model, such a profile represents a localized region of high probability in the asset price variable. On the other hand, for a dark soliton, the density contains a dip on a nonzero background. In that case, the logarithmic formulation must be understood away from points where
$\rho=0$.

In the two-component Manakov formulation of the Ivancevic model, the natural density is not the density of one component alone, but the total intensity
\begin{equation}
	\rho=|\sigma_f|^2+|\psi|^2,
\end{equation}
where $\sigma_f$ denotes the volatility wave and $\psi$ denotes the option-price wave. In our analogy, this quantity measures the total intensity of the coupled volatility--option-price system. When restricted to the fixed-polarization case, i.e., both components share a common phase gradient, the associated velocity reduces to this common phase-gradient velocity, and the total density satisfies the same scalar Betchov-type conservation laws. This provides
a natural interpretation of Ivancevic's coupled model; that is, the joint market-wave intensity is transported along the asset-price axis at a velocity determined by the shared phase of the volatility and option-price waves.

Finally, for the scalar \(N\)-soliton solutions, \(\rho_N=|\psi_N|^2\) represents a single probability density with \(N\) interacting localized peaks. These peaks may be interpreted, within the model, as localized asset-price regions with non-negligible probability. The associated current $J_N$ describes how this probability mass is transported along the asset-price axis, as illustrated in Figure~\ref{fig:hirota-densities} for \(N=2\). During nonlinear interaction, the density is redistributed through interference effects, and after interaction the individual soliton profiles are asymptotically restored, as is characteristic of elastic soliton collisions.

	\section{Discussion and outlook}
In this article, we have presented an explicit Betchov-type hydrodynamic structure associated with the Ivancevic option-pricing equation. Through the amplitude-phase decomposition of the wave function, the Ivancevic model gives rise to a density, a current and a momentum flux satisfying a pair of conservation laws. This provides a direct correspondence between the Ivancevic equation and Hasimoto--Betchov formulation of vortex filament dynamics, while keeping the interpretation of the variables within the Ivancevic option-pricing framework. 

The construction also clarifies the role of the coefficients. For the more general NLS equation \eqref{eq:ivancevic-nls-gen}, the nonlinear pressure coefficient term appearing in momentum flux \eqref{eq:flux-gen} is determined by $D\beta$, while the dispersive term is determined by $D^2$. In the Ivancevic normalization $D=\sigma/2$, these coefficients become $\sigma\beta/2$ and $\sigma^2/4$, respectively. This coefficient bookkeeping is useful when comparing the Ivancevic equation with the Hasimoto normalization or with defocusing variants \cite{Ding1998}. 

The formulation developed here is deliberately elementary, but it opens several natural directions for future work. One immediate direction concerns forced or regulatory versions of the Ivancevic equation considered in  \cite{Delisle2026}.  If an external potential $V(s)$ is added,
\begin{equation}
	i\psi_t+\frac{\sigma}{2}\psi_{ss}
+\beta|\psi|^2\psi	+V(s)\psi=0,
\end{equation}
the first conservation law equation \eqref{eq:convlaw1} remains unchanged, whereas the second equation becomes forced,
\begin{equation}
	(\rho u)_t+
	\partial_s
	\left[
	\rho u^2
	-\frac{\sigma\beta}{2}\rho^2
	-\frac{\sigma^2}{4}\rho(\log\rho)_{ss}
	\right]
	=
	\sigma\rho V_s .
\end{equation}
For a harmonic regulatory potential, this produces a restoring force acting on the density. Such a forced Betchov system may provide a useful framework for studying regulated or constrained price-wave dynamics.

The relation with the classical Black–Scholes framework is another interesting direction \cite{Contreras2010}. Recently, it has been shown that the classical Black--Scholes equation can also be reformulated in quantum-hydrodynamical variables via Wick rotation and the Madelung transformation \cite{Shushi2026}. This provides a  linear Black--Scholes counterpart to the nonlinear Ivancevic--Betchov formulation developed here. It could be useful to compare the two hydrodynamic pictures more systematically and clarify how the current and momentum flux introduced in this paper relate to option sensitivities and to the quantum potential appearing in the Black--Scholes hydrodynamic picture.

A further direction is empirical or semi-empirical calibration. The soliton densities considered here are explicit and low-dimensional. For example, a one-soliton density represents one localized region of high probability, while $N$-soliton density represents $N$ interacting localized regions. One may therefore attempt to compare such profiles to empirical or synthetic option-price distributions, or implied volatility snapshots. In that setting, the Madelung variables would provide not only a fitted density $\rho$, but also an associated phase-gradient velocity $u$. This would give a concrete route toward empirical implementation of the present framework.  

Another possible extension concerns stochastic volatility. Ivancevic's two-component formulation already treats volatility and option price as coupled wave fields. In the present paper we considered the fixed-polarization case, where the total density satisfies a scalar Betchov-type system. More general vector solutions may have component-dependent phases and developing a complete hydrodynamic theory for such vectorial solutions is left for future work.
 
In summary, the present paper identifies a direct structural correspondence between the Hasimoto--Betchov formulation of vortex filament dynamics and the Ivancevic adaptive-wave model for option pricing. Although the Madelung transformation of NLS equations into hydrodynamic variables is classical, the Ivancevic equation gives a new setting in which this hydrodynamic structure acquires a coefficient-explicit financial interpretation. In particular, the volatility and adaptive-potential coefficients determine the pressure and dispersive contributions in the momentum flux. The resulting formulation provides a compact language for organizing known Ivancevic-type solutions, comparing focusing and defocusing regimes, and developing future analytical, numerical and empirical investigations. We hope this viewpoint will encourage further interaction between nonlinear wave theory, geometric fluid mechanics and quantitative finance.

\bibliography{references}
\bibliographystyle{ieeetr}
\end{document}